\DeclareMathOperator*{\E}{\mathsf{E}}
\newcommand{\sig}{\sigma}
\newcommand{\ep}{\epsilon}
\newcommand{\alp}{\alpha}
\newcommand{\Del}{\Delta}
\newcommand{\del}{\delta}
\newcommand{\sbp}[1]{\big(#1\big)}
\newcommand{\bp}[1]{\Big(#1\Big)}
\newcommand{\bP}[1]{\left(#1\right)}
\newcommand{\sbb}[1]{\big[#1\big]}
\newcommand{\bb}[1]{\Big[#1\Big]}
\newcommand{\bB}[1]{\left[#1\right]}
\newcommand{\cM}{\mathcal{M}}
\newcommand{\cU}{\mathcal{U}}
 \newcommand{\LP}{\ensuremath{\mathsf{LP}}\xspace}
\newcommand{\xhdr}[1]{\vspace{1mm} \noindent{\textbf{#1}}}
 \newcommand{\OPT}{\ensuremath{\mathsf{OPT}}\xspace}
\newcommand{\opt}{\ensuremath{\mathsf{OPT}}\xspace}
\def \dalg {\ensuremath{\mathsf{D\mbox{-}ALG}}\xspace}
\newcommand{\kvv}{\ensuremath{\mathsf{KVV}}\xspace}
\def \alg {\ensuremath{\mathsf{ALG}}\xspace}
\def \rad {\ensuremath{\mathsf{RGA}}\xspace}
\newcommand{\tT}{\tilde{T}}
\newcommand{\hT}{T}
\newcommand{\bluee}[1]{{\color{blue}#1}}
\newcommand{\redd}[1]{{\color{red}#1}}
\newcommand{\ie}{\emph{i.e.,}\xspace}
\newcommand{\eg}{\emph{e.g.,}\xspace}
\newcommand{\anhai}[1]{{\color{black}#1}} 
\title{Correcting the Foundational Analysis of 
Karp--Vazirani--Vazirani (STOC 1990):\\
A Rigorous Revision of the \mbox{$1 - \exp(-1)$} Upper Bound%
\thanks{This is the second version as of November 23, 2025.}
}
\author{Pan Xu%
\thanks{Email: \texttt{panxu0@gmail.com}.}
}
\begin{document}
\maketitle

\begin{abstract}
We revisit the classical analysis of Karp, Vazirani, and Vazirani (\kvv) (STOC~1990), which established the well-known upper bound of \(1 - 1/e\) as the limiting proportion of vertices that can be matched by any online procedure in a canonical bipartite structure. Although foundational, the original analysis contains several inaccuracies, including a fundamental technical gap in the treatment of the underlying discrete process. We give a transparent and fully rigorous reconstruction of the \kvv\ argument by reformulating the evolution of available neighbors as a discrete-time death process and deriving a sharp upper bound via a simple factor-revealing linear program that captures the correct recurrence structure. This yields a precise bound \(\lceil n(1 - 1/e) + 2 - 1/e \rceil\) on the expected number of matched vertices, refining the classical claim \(n(1 - 1/e) + o(n)\). Our goal is \emph{not} to optimize this upper bound, but to provide a mathematically sound and conceptually clean \emph{correction} of the classical \kvv\ analysis, while remaining faithful to its original combinatorial framework.
\end{abstract}


\begin{keywords}
online bipartite matching, Karp--Vazirani--Vazirani, $1 - 1/e$ upper bound,
factor-revealing linear program, discrete-time death process, competitive analysis
\end{keywords}

\begin{MSCcodes}
05C70, 68W20, 90B80, 60J20
\end{MSCcodes}

\section{Introduction}

Since the classical work of Karp, Vazirani, and Vazirani~\cite{kvv}, the online
bipartite matching problem has become a central object of study in theoretical
computer science and discrete optimization. Numerous variants have been
proposed; see the surveys~\cite{huang2024online,mehta2012online}. Although
motivated originally by adversarial-order online computation, the model has
since appeared in a wide range of discrete allocation settings, including
ride-hailing platforms~\cite{feng2023two,pavone2022online,aouad2022dynamic,lowalekar2020competitive},
assortment and recommendation systems~\cite{chen2024assortment,gong2022online,ma2020algorithms},
and online resource allocation~\cite{manshadi2023fair,stein2020advance}. 

For convenience, we refer to the paper of~\cite{kvv} simply as \kvv.
With more than a thousand citations on Google Scholar by the end of~2024,
\kvv{} established two fundamental results that characterize the asymptotic
performance of online matching.  For a bipartite graph with $n$ vertices on
each side, \kvv{} claims that:
\begin{itemize}
\item[(1)] \textbf{Lower Bound.} The Ranking algorithm achieves an expected
matching size of at least $(1-1/e)\,n + o(n)$.
\item[(2)] \textbf{Upper Bound.} No online algorithm can obtain an expected
matching size exceeding $(1-1/e)\,n + o(n)$.
\end{itemize}
Together, these show that the Ranking algorithm is asymptotically optimal.

However, the proofs in \kvv{} are technically involved, and parts of the
derivation—especially in the argument for the upper bound—lack precision.
This has been noted in prior work.\footnote{For example,
\cite{eden2021economics} observed that ``The analysis in the original paper
(of \kvv) was quite complicated (and imprecise in places).''}  Considerable
effort has since been devoted to providing simpler or more rigorous proofs of
the lower bound~\cite{eden2021economics,birnbaum2008line,devanur2013randomized}.
In contrast, comparatively little attention has been paid to the correctness and
structure of the upper-bound analysis.

In this paper, we revisit in detail the upper-bound argument of \kvv.  
We identify several technical issues—some minor, others fundamental—in the original
derivation, and we introduce a simple yet rigorous method that resolves these
difficulties.  Our goal is to clarify and strengthen the classical argument,
while remaining faithful to its original combinatorial structure.

\subsection{Review of the Approach Establishing the $1 - 1/e$ Upper Bound in~\kvv}

\xhdr{Online Matching under Adversarial Order.}
Consider a bipartite graph $G = (I, J, E)$, where $I$ and $J$ represent the sets
of offline and online agents, respectively. Each agent in $I$ is static, whereas
each online agent $j \in J$ is revealed sequentially (referred to as
``arriving"), together with its set of offline neighbors, denoted by $N_j$. Upon
the arrival of an online agent $j$, an algorithm (\alg) must make an immediate
and irrevocable decision, without knowledge of future arrivals, either to
reject $j$ or to match $j$ to one of its neighbors $i \in N_j$. In the latter
case, the offline agent $i$ becomes unavailable (assuming unit capacity),
resulting in a successful match. The objective is to design an algorithm~$\alg$
that maximizes the expected total number of matches (assuming unweighted
edges). Note that \alg has no prior knowledge of the graph structure~$G$,
which is determined by an oblivious adversary prior to observing any
information about~\alg.

The work of~\kvv applies Yao's lemma to establish the upper bound of
$1 - 1/e$ for any online algorithm. Specifically, they construct a collection of
online matching instances as follows.\footnote{Although we aim to present the
exact idea in~\kvv, we use a slightly different set of notations for ease of
presentation.}
Consider the basic module instance shown in~\Cref{fig:ws-ins}, denoted
by $G = (I, J, E)$, where $|I| = |J| = n$ and each online agent
$j \in \{1, 2, \ldots, n\} := [n]$ has a set of offline neighbors
$N_j = \{\,i \mid j \le i \le n\,\}$. Alternatively, the instance $G$ can be
represented by a unique $n \times n$ binary matrix, denoted by~$M$, such that
each row $i \in [n]$ corresponds to offline agent~$i$, while each column
$j \in [n]$ corresponds to online agent~$j$. In this representation,
$M(i,j) = 1$ if and only if $(i,j) \in E$, \ie $i \in N_j$.

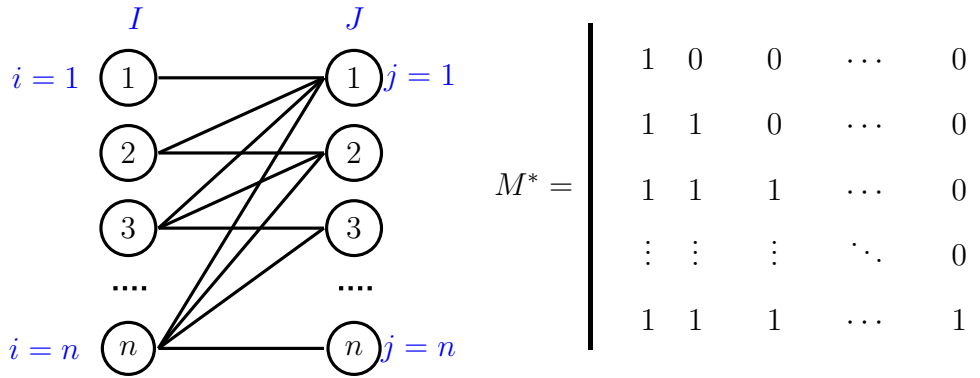
\begin{figure}[ht!]
\centering
\begin{minipage}{.5\linewidth}
\centering
\begin{tikzpicture}[scale=1]
 \draw (0.1,0.5) node[above] {\textcolor{blue}{$I$}};
 \draw (0,0) node[minimum size=0.2mm,draw,circle, very thick] {{$1$}};
 \draw (0,-1) node[minimum size=0.2mm,draw,circle, very thick] {{$2$}};
 \draw (0,-2) node[minimum size=0.2mm,draw,circle, very thick] {{$3$}};
 \draw[dotted, ultra thick] (-0.2,-2.8)--(0.3,-2.8);
 \draw (0,-3.6) node[minimum size=0.2mm,draw,circle, very thick] {{$n$}};

 \draw (3,0.5) node[above] {\textcolor{blue}{$J$}};
 \draw (3,0) node[minimum size=0.2mm,draw,circle, very thick] {{$1$}};
 \draw (3,-1) node[minimum size=0.2mm,draw,circle, very thick] {{$2$}};
 \draw (3,-2) node[minimum size=0.2mm,draw,circle, very thick] {{$3$}};
 \draw[dotted, ultra thick] (2.8,-2.8)--(3.3,-2.8);
 \draw (3,-3.6) node[minimum size=0.2mm,draw,circle, very thick] {{$n$}};

 \draw[-, very thick] (0.4,0)--(2.6,0);
 \draw[-, very thick] (0.4,-1)--(2.6,0);
 \draw[-, very thick] (0.4,-2)--(2.6,0);
 \draw[-, very thick] (0.4,-3.6)--(2.6,0);
 \draw[-, very thick] (0.4,-1)--(2.6,-1);
 \draw[-, very thick] (0.4,-2)--(2.6,-1);
 \draw[-, very thick] (0.4,-3.6)--(2.6,-1);
 \draw[-, very thick] (0.4,-2)--(2.6,-2);
 \draw[-, very thick] (0.4,-3.6)--(2.6,-2);
 \draw[-, very thick] (0.4,-3.6)--(2.6,-3.6);

 \draw (-0.5,0) node[left] {\bluee{$i = 1$}};
 \draw (-0.5,-3.6) node[left] {\bluee{$i = n$}};
 \draw (4.8,0) node[left] {\bluee{$j = 1$}};
 \draw (4.8,-3.6) node[left] {\bluee{$j = n$}};
\end{tikzpicture}
\end{minipage}%
\hfill
\begin{minipage}{.5\linewidth}
\renewcommand{\arraystretch}{1.7}
\centering
$
M^* =
\left|
\begin{array}{ccccccccc}
1 & 0 &   & 0 &   & \cdots &   & 0 \\
1 & 1 &   & 0 &   & \cdots &   & 0 \\
1 & 1 &   & 1 &   & \cdots &   & 0 \\
\vdots & \vdots &   & \vdots &   & \ddots &   & 0 \\
1 & 1 &   & 1 &   & \cdots &   & 1
\end{array}
\right |
$
\end{minipage}
\medskip
\caption{A basic module instance proposed by~\kvv{} to prove the upper bound of
\(1 - 1/e\). The instance (Left) can be represented using a unique \(n \times n\)
binary matrix \(M^*\) (Right), where \(M^*(i,j) = 1\) if and only if the online
agent \(j\) is connected to the offline agent \(i\), \ie \((i,j) \in E\).}
\label{fig:ws-ins}
\end{figure}

Let \(\cM\) denote the collection of all matrices generated by permuting the
\(n\) rows of \(M^*\) in~\Cref{fig:ws-ins}. Each matrix in \(\cM\) uniquely
determines an online matching instance, and vice versa; thus \(\cM\) may be
viewed as a collection of \(n!\) such instances. \emph{Throughout this paper we
refer to an online matching instance simply as a matrix in \(\cM\)}. Let
\(\cU(\cM)\) denote the uniform distribution over~\(\cM\).

\begin{lemma}[\cite{kvv}]
For any deterministic algorithm \(\dalg\), its expected performance (the
expected size of the matching obtained) on \(\cU(\cM)\), denoted \\
\(\E_{M \sim \cU(\cM)}[\sig(\dalg,M)]\), satisfies
\[
\E_{M \sim \cU(\cM)}[\sig(\dalg,M)]
=
\E[\sig(\rad,M^*)],
\]
where \(\rad\) is formally defined in~\Cref{alg:rad} as the Randomized Greedy
Algorithm that uniformly samples an available offline neighbor, if any, for
each arriving online agent. Here, \(\E[\sig(\rad,M^*)]\) represents its expected
performance on the fixed matrix \(M^*\), with the expectation taken over the
randomness of~\(\rad\).
\end{lemma}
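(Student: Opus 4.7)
The approach is to couple $M \sim \cU(\cM)$ with a uniformly random permutation $\pi$ of $[n]$ via $M = P_\pi M^*$, placing the ``type-$t$'' row of $M^*$ at position $\pi(t)$ in $M$; then $N_j^M = \pi(\{j, j+1, \ldots, n\})$ is a nested chain with $|N_j^M| = n-j+1$. Alongside $\dalg$'s position picks $i_j$, I will track the induced \emph{type} picks $u_j := \pi^{-1}(i_j)$; since the matching size is the number of non-$\bot$ entries in $(u_1, \ldots, u_n)$, it suffices to prove that $(u_1, \ldots, u_n)$ has the same joint law as $(v_1, \ldots, v_n)$, where $v_j$ denotes $\rad$'s pick on $M^*$.

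I would establish this by induction on $j$, using a principle-of-deferred-decisions symmetry. The observations $(N_1^M, \ldots, N_j^M)$ recover $\pi(1), \ldots, \pi(j-1)$ as successive set differences, so by symmetry of $\pi$ the restriction $\pi|_{\{j, \ldots, n\}}$ is a uniformly random bijection onto $N_j^M$ conditional on these observations. Since $\dalg$ is deterministic, $i_j$ is a fixed function of the observations; for a greedy $\dalg$ it lies in $N_j^M \setminus \{i_k : k < j,\, i_k \in N_j^M\}$ whenever this set is nonempty. Conditioning further on the types $\{u_k : k < j,\, i_k \in N_j^M\}$ pins $\pi$ down on those positions, and the conditional law of the remaining bijection makes $u_j = \pi^{-1}(i_j)$ uniform over $\{j, \ldots, n\}$ minus those types. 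Using the identity $\{u_k : k < j,\, i_k \in N_j^M\} = \{u_1, \ldots, u_{j-1}\} \cap \{j, \ldots, n\}$ (since $u_k < j$ exactly when $i_k \notin N_j^M$), marginalizing out the observations yields exactly the rule ``$u_j$ is uniform on $\{j, \ldots, n\} \setminus \{u_1, \ldots, u_{j-1}\}$, or $\bot$ when this set is empty'' that defines $\rad$ on $M^*$. This completes the inductive step, and the equality of expected matching sizes follows.

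The main obstacle I expect is the marginalization step: one must check that the conditional law of $u_j$ given the observations together with the surviving past types depends on those quantities only through the excluded-type set, so that integrating out the observations preserves the uniform distribution. Symmetry of $\pi|_{\{j,\ldots,n\}}$ handles this cleanly, but the bookkeeping between ``observation-conditioning'' and ``past-type-conditioning'' deserves careful accounting. A separate caveat worth flagging in the write-up is that the stated equality implicitly requires $\dalg$ to match whenever some neighbor is available (i.e., be greedy); otherwise trivial counterexamples such as the always-reject algorithm break the equality. Since the downstream upper bound takes $\max_{\dalg}$ via Yao's lemma, restricting to greedy $\dalg$ is without loss of generality for the argument that follows.
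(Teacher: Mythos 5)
The paper does not prove this lemma; it is stated with the citation \verb|\cite{kvv}| and treated as a black box, with the paper's own technical work beginning downstream at the computation of $\E\big[\sig(\rad, M^*)\big]$. So there is no paper-internal proof to compare against, and your argument stands on its own. The coupling you set up is the right mechanism: writing $M$ as a uniform row-permutation of $M^*$, observing that $\dalg$ sees only the nested neighbor sets $N_j^M$ (equivalently, the positions $\pi(1),\ldots,\pi(j-1)$), and then exploiting that, conditionally on these observations, $\pi|_{\{j,\ldots,n\}}$ is a uniformly random bijection onto $N_j^M$. The identity $\{u_k : k<j,\ i_k\in N_j^M\} = \{u_1,\ldots,u_{j-1}\}\cap\{j,\ldots,n\}$ is correct and is the crux that lets you read off $\rad$'s transition rule on $M^*$.

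Your caveat about greedy $\dalg$ is substantively correct and worth stressing: as literally stated, the lemma fails for non-greedy deterministic algorithms (e.g., always-reject has $\E_{M\sim\cU(\cM)}[\sig(\dalg,M)]=0$). What is actually invoked in Inequality~\eqref{ineq:1-20-a} is only the maximum over deterministic $\dalg$, and a one-line exchange argument shows this maximum is attained by a greedy algorithm, so restricting to greedy $\dalg$ loses nothing for the downstream Yao bound. The clean statement would replace ``for any deterministic $\dalg$'' with ``for any greedy deterministic $\dalg$,'' plus the remark that the max over all deterministic algorithms is attained at one of these.

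The one place I would insist on more detail in a final write-up is the marginalization step you yourself flag. You should state explicitly that, for every fixed realization $(\tau_1,\ldots,\tau_{j-1})$ of the past types and every observation sequence consistent with it and having positive probability, the conditional law of $u_j$ is uniform on $\{j,\ldots,n\}\setminus\{\tau_1,\ldots,\tau_{j-1}\}$ (or $\bot$ if this set is empty) --- a distribution depending only on $(\tau_1,\ldots,\tau_{j-1})$ --- and then apply the tower property to integrate out the observations. As sketched, the argument is correct; this is just the step where the informal ``by symmetry'' should become a short computation.
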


By Yao's lemma, we obtain for any (possibly randomized) algorithm~\(\alg\),
\begin{align}\label{ineq:1-20-a}
\sig(\alg,\cM)
:=
\min_{M \in \cM} \E[\sig(\alg,M)]
&\le
\max_{\dalg}\;
\E_{M \sim \cU(\cM)}[\sig(\dalg,M)]
\\
&=
\E[\sig(\rad,M^*)], \nonumber
\end{align}
where the expectation \(\E[\sig(\alg,M)]\) is taken over the randomness in~\alg,
and \\ \(\max_{\dalg}\) ranges over all deterministic algorithms. For any instance
\(M \in \cM\), the offline optimum (the clairvoyant maximum matching), denoted
\(\opt\), always returns a matching of size~\(n\), i.e., \(\sig(\opt,M)=n\).
Therefore, Inequality~\eqref{ineq:1-20-a} implies that the competitiveness
achieved by any online algorithm over instances in \(\cM\) is upper bounded by
\(\E[\sig(\rad,M^*)]/n\). Consequently, it remains to characterize the exact or
asymptotic value of \(\E[\sig(\rad,M^*)]/n\) as \(n \to \infty\).

\subsection{Review of the Approach of Computing the Value \(\E\sbb{\sig(\rad, M^*)}\) in~\kvv}\label{sec:kvv}

For completeness, we formally state the randomized greedy (\rad) in~\Cref{alg:rad}. In this section, we rephrase the approach for computing the value  \(\E\sbb{\sig(\rad, M^*)}\) presented by \kvv, using a different set of notations and analysis, although we aim to recover the essential idea from their work.\footnote{The analysis in~\kvv\ is rather brief and lacks rigor, exhibiting both minor and major issues (see the remarks at the end of this section). We adopt a different set of notations and supply several missing technical details, which we believe help illuminate the major issues, as outlined in Section~\ref{sec:bug}.}

\begin{algorithm}
\caption{Randomized Greedy Algorithm (\rad)}
\label{alg:rad}
\begin{algorithmic}
\FOR{$j = 1,2,\ldots,n$}
    \STATE{Uniformly sample an available offline neighbor of $j$, if any, and match it to $j$}
\ENDFOR
\RETURN the set of all matched edges
\end{algorithmic}
\end{algorithm}

Consider applying \(\rad\) to the instance $M^*$ in~\Cref{fig:ws-ins}. For ease of presentation, we use \(t \in [n]\) to index the online agent \(j\) arriving at time \(t = j\), referring to it simply as online agent \(t\). For each time \(t \in [n]\), let \(Y_t\) represent the number of available offline neighbors of agent \(t\) before implementing any online actions from \(\rad\), with \(Y_1 = n\). For each $t \in [n]$, define \(\Del_t := Y_{t+1} - Y_t\) as the change in the number of available offline neighbors during time \(t\).

According to \kvv, \(\Del_t \in \{-1, -2\}\), and \(\Del_t = -2\) if and only if, when agent \(t\) arrives at time \(t\), the following two events occur simultaneously: 
(1) its offline neighbor \(i = t\) happens to be available, denoted by \(E_{t,1}\), and 
(2) some available offline neighbor of agent \(t\) other than \(i = t\) is sampled by \rad, denoted by \(E_{t,2}\). 

Consequently, \kvv\ claimed that
\begin{align*}
\Pr[E_{t,1} \mid Y_t] &= \frac{Y_t}{n-t+1}, \quad \Pr[E_{t,2} \mid Y_t, \, E_{t,1}] = 1 - \frac{1}{Y_t},
\end{align*}
which leads to \(\Pr[E_{t,1} \wedge E_{t,2} \mid Y_t] = \frac{Y_t - 1}{n-t+1}\). Thus, they further claimed that
\begin{align}
&\E[\Del_t \mid Y_t]= (-2) \cdot \frac{Y_t - 1}{n-t+1} + (-1) \cdot \bp{1 - \frac{Y_t - 1}{n-t+1}} \\
&\quad \quad \quad \quad= (-1) - \frac{Y_t - 1}{n-t+1}, \label{eqn:22-b}\\
\Rightarrow &\E[\Del_t] = (-1) - \frac{\E[Y_t] - 1}{n-t+1}. \label{eqn:21-c} \footnotemark
\end{align}
\footnotetext{Note that our notations $Y_t$ and $\Del_t$ correspond to $y(t)$ and $\Del y$ in \kvv. As a result, Equality~\eqref{eqn:21-c} corresponds to the equation $\frac{\E[\Del y]}{\E[\Del x]} = 1 + \frac{y(t) - 1}{x(t)}$ in \kvv, where, in their context, $x(t) = n - t + 1$ and $\Del x = -1$.}


For each \(t \in [n]\), define \(\mu_t := \E[Y_t]\) with \(\mu_1 = n\). Therefore, the above equality suggests that 
\begin{align}\label{eqn:21-b}
\mu_{t+1} - \mu_t = (-1) - \frac{\mu_t - 1}{n - t + 1}, \quad \mu_1 = n.
\end{align}

Define \(g(t/n) := \mu_t / n\), where \(g: [0,1] \to [0,1]\) is a continuous function with \(g(0) = 1\). The sequence \(\{\mu_t \mid t \in [n]\}\) can be approximated using an ordinary differential equation (ODE) as \(n \to \infty\), as follows:
\begin{align}
& \quad \mu_{t+1} - \mu_t = (-1) - \frac{\mu_t - 1}{n - t + 1} \\
\Leftrightarrow & \quad \frac{\mu_{t+1} - \mu_t}{n} = (-1) \cdot \frac{1}{n} \cdot \Big(1 + \frac{\mu_t / n - 1 / n}{1 - t / n + 1 / n}\Big) \label{eqn:22-a} \\
\Leftrightarrow & \quad g((t+1)/n) - g(t/n) = (-1) \cdot \frac{1}{n} \cdot \Big(1 + \frac{g(t/n) - 1/n}{1 - t/n + 1/n}\Big) \nonumber \\
\Leftrightarrow & \quad \frac{g((t+1)/n) - g(t/n)}{1/n} = (-1) \cdot \Big(1 + \frac{g(t/n) - 1/n}{1 - t/n + 1/n}\Big) \nonumber \\
\Rightarrow &  \quad \frac{\mathrm{d}g(z)}{\mathrm{d}z} = (-1) \cdot \Big(1 + \frac{g(z)}{1 - z}\Big) \label{eqn:21-d} \\ 
&\quad \textbf{(by setting } z = t/n \textbf{ and \(n \to \infty\))} \nonumber
\end{align}

Solving the ODE in~\eqref{eqn:21-d} with the boundary condition \(g(0) = 1\), we obtain:
\begin{align}
g(z) = (1 - z) \cdot \Big(1 + \ln(1 - z)\Big).\label{eqn:21-a} \footnotemark
\end{align}

\footnotetext{This corresponds to the equation $y = 1 + x \sbp{\frac{n-1}{n} \redd{-} \ln \frac{x}{n}}$ in \kvv, where the symbol $\redd{-}$ was a typo: it should be $\bluee{+}$.
 The paper of \kvv did not normalize $x(t) = n-t+1$ and $y(t)$ (corresponding to $\mu_t$ here) by scaling them down to $[0,1]$, as we did here. Thus, after dividing both sides of $y = 1 + x \sbp{\frac{n-1}{n} + \ln \frac{x}{n}}$ by $n$ and then taking $n \to \infty$, it reduces to  the exact equation in~\eqref{eqn:21-a} with $z := t/n$.}

Thus, by solving \(g(z) = 0\), we find a unique solution \(z^* = 1 - 1/e\). This indicates that when \(t^* = n \cdot z^* = n \cdot (1 - 1/e)\), we have \(g(z^*) = 0\), which further implies \(\mu_{t^*}/n \to 0\) as \(n \to \infty\). Equivalently, \(\mu_{t^*} = o(n)\).  As a result, \kvv concluded that the expected size of the matching returned by \(\rad\) is:
\begin{align*}
\E\big[\sig(\rad, M^*)\big] = t^* + o(n) = n \cdot (1 - 1/e) + o(n).
\end{align*}

\xhdr{Clarifying Remarks on the Above Rephrased Analysis of~\kvv.}  As mentioned earlier, our \anhai{notations} and analysis are slightly different from those in \kvv, although they are fundamentally based on the same idea. Specifically, they treated \(\mu_t\) (as opposed to the notation \(y(t)\) used in their work) as a function of \(x(t) := n - t + 1\). They developed an ODE involving \(\mathrm{d}y/\mathrm{d}x\), which takes a different form from ours, though the two are mathematically equivalent. 
Additionally, they did not normalize the values of \(y(t)\) and \(x(t)\) by scaling them down to \([0,1]\), as we did here, but instead directly set the initial values to \(x(1) = y(1) = n\). 

The derivation of the ODE involving \(g(z)\), specifically the steps from~\eqref{eqn:22-a} to~\eqref{eqn:21-d}, is omitted in~\kvv. For completeness, we supply the full derivation here.  We believe our approach of deriving the ODE is \emph{more rigorous} than that in \kvv, even though both approaches are conceptually the same. That being said, both approaches still suffer from a few fundamental technical flaws, as outlined in the next Section~\ref{sec:bug}.

\subsection{Fundamental Errors in the Approach of Computing \(\E\sbb{\sig(\rad, M^*)}\) in~\kvv}\label{sec:bug}

First, the approach of \kvv, as detailed in Section~\ref{sec:kvv}, essentially
computes the expectation of the time, denoted by \(\tT\), at which
\(\mu_{\tT} = o(n)\). \emph{However, our actual objective is to compute the
expectation of the time, denoted by \(T\), at which \(Y_{T+1}\) becomes zero};
see Equation~\eqref{def:T} for a precise definition. Importantly, there is no
clear relationship between these two expectation values.

Second, Equality~\eqref{eqn:22-b} implicitly conditions on the event
\(Y_t \ge 2\), rather than considering arbitrary values of \(Y_t\). This leads to
inaccuracies in subsequent Equations~\eqref{eqn:21-c} and~\eqref{eqn:21-b}.\footnote{More
precisely, Equality~\eqref{eqn:21-c} is valid when \(Y_t \ge 1\), but incorrect
when \(Y_t = 0\). A corrected version of Equation~\eqref{eqn:21-b} is provided
in Equality~\eqref{eqn:28-a}.} Consequently, this mistake invalidates the
resulting ODE in~\eqref{eqn:21-d} and its solution presented
in~\eqref{eqn:21-a}. In particular, we observe that when \(z > 1 - 1/e\), the
solution satisfies \(g(z) < 0\), implying a negative expected number of
available offline neighbors for online agents arriving at times
\(t > n \cdot (1 - 1/e)\), which is evidently impossible.

Furthermore, neglecting boundary conditions such as \(Y_t \ge 0\) fundamentally
\emph{invalidates} the analytical approach, even if the goal were only to
upper-bound \(\E[\tT]\) (the expectation of the time \(\tT\) defined
by \(\mu_{\tT} = o(n)\)). The dynamics expressed by
Equations~\eqref{eqn:21-c} and~\eqref{eqn:21-b} permit \(Y_t\) to continue
decreasing even after reaching zero. This oversight causes the expectations
\(\mu_t := \E[Y_t]\) and the normalized function \(g(z) = \mu_{z \cdot n}/n\)
(derived from the ODE~\eqref{eqn:21-d}) to be systematically smaller than their
correct counterparts. Specifically, the current expectations incorrectly
include scenarios in which \(Y_t\) becomes negative; such scenarios must instead
be treated as having value \(0\).

As a result, the non-increasing sequence \((\mu_t)\) declines at an
artificially accelerated rate compared to the actual process, causing the
turning point \(\tT\) (defined by \(\mu_{\tT} = o(n)\)) to occur prematurely.
Consequently, the solution \(z^*\) obtained from the incorrect equation
\(g(z) = 0\) (equivalently \(\mu_{z \cdot n} = o(n)\)) provides only a
\emph{lower bound} on the true value, even though the intended purpose of the
analysis is to derive an \emph{upper} bound.

\subsection{Main Contributions}

In this paper, we present a detailed and comprehensive review of the classical
analyses establishing the well-known \(1 - 1/e\) upper bound on the competitiveness
achievable by any online algorithm, originally introduced in~\kvv. We carefully
identify and correct both minor inaccuracies and fundamental errors in their
analytical approach. Our main contribution is a \emph{simple yet rigorous}
method that resolves these issues. Specifically, we establish the following
theorem:

\begin{theorem}\label{thm:main}
\[
\E\sbb{\sig(\rad, M^*)}
   \le \big\lceil (1-1/e) \cdot n + 2 - 1/e \big\rceil,
   \qquad \forall n = 1, 2, \ldots.
\]
\end{theorem}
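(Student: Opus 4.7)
The plan is to replace the broken ODE analysis in \kvv\ with a clean discrete argument that directly controls $\E[T]$, where $T := \sig(\rad, M^*)$. The route has three stages: (i) pin down the dynamics of $(Y_t)$ by proving the structural lemma \kvv\ implicitly assume; (ii) extract a clean identity $\sum_{t=1}^{n} p_t/(n-t+1) = 1$, where $p_t := \Pr[Y_t \ge 1]$; and (iii) upper-bound $\E[T] = \sum_t p_t$ by an LP-style argument together with harmonic-number estimates.

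First I would establish the missing structural lemma: conditional on $Y_t$, the set $A_t$ of available offline neighbors of online agent $t$ is uniformly distributed over the $Y_t$-subsets of $\{t, t+1, \ldots, n\}$. A clean way to see this is to observe that in each of the first $t-1$ steps of \rad, whenever the algorithm's pick happens to land in $\{t, \ldots, n\}$ it is uniform over the currently available elements of that subset; the classical ``sampling without replacement leaves a uniform remainder'' fact then forces $A_t$ to be uniform at every intermediate size. From this lemma, a short case analysis on whether $t \in A_t$ gives $\E[Y_{t+1} \mid Y_t \ge 1] = \frac{n-t}{n-t+1}(Y_t - 1)$ and $Y_{t+1} = 0$ whenever $Y_t = 0$. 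Taking expectations yields $\mu_{t+1} = \frac{n-t}{n-t+1}(\mu_t - p_t)$ with $\mu_t := \E[Y_t]$. Dividing by $n-t$, telescoping $t = 1, \ldots, n-1$, and using $\mu_1 = n$ together with $\mu_n = p_n$ (because $Y_n \in \{0,1\}$) yields the clean identity $\sum_{t=1}^{n} p_t/(n-t+1) = 1$. This identity is what the flawed ODE in \kvv\ was trying to capture.

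Next I would bound $\E[T] = \sum_{t} p_t$ subject to this identity and $p_t \in [0,1]$. Since the weights $1/(n-t+1)$ are increasing in $t$, a standard swap argument shows $\sum_t p_t$ is maximized by a threshold configuration $p_t = 1$ for $t \le k$, $p_{k+1} = \alpha \in [0,1)$, and $p_t = 0$ for $t > k+1$, with $k = n - m^\sharp$ and $m^\sharp := \min\{m \ge 0 : H_m \ge H_n - 1\}$ (here $H_m$ denotes the $m$-th harmonic number). This gives $\E[T] < n - m^\sharp + 1$. The final step is the integral estimate $\sum_{k=m^\sharp+1}^{n} 1/k \ge \ln\bp{\frac{n+1}{m^\sharp+1}}$, which combined with $\sum_{k=m^\sharp+1}^{n} 1/k \le 1$ (a direct consequence of the definition of $m^\sharp$) yields $m^\sharp + 1 \ge (n+1)/e$, hence $\E[T] < (1 - 1/e)n + 2 - 1/e \le \lceil (1-1/e)n + 2 - 1/e \rceil$, giving the theorem.

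The main obstacle is the structural lemma in the first stage: this is exactly the claim \kvv\ use without justification, and without it neither the marginal $\Pr[t \in A_t \mid Y_t] = Y_t/(n-t+1)$ nor the subsequent recurrence can be rigorously derived. A naive ``permute the offline side'' symmetry argument does not work because $M^*$ has no nontrivial automorphism on the offline side (the edge structure distinguishes every offline agent); one must instead argue at the level of \rad's sequential picks, isolating the sub-sequence of picks that land in $\{t, \ldots, n\}$ and then invoking uniform sampling without replacement within that subset.
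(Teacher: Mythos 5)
Your proof is correct, and it takes a genuinely different (and arguably cleaner) route than the paper. Both arguments start from the same death-process recurrence $\delta_{t+1}=\bigl(1-\tfrac{1}{n-t+1}\bigr)\bigl(\delta_t-\alpha_t\bigr)$ with $\delta_t=\E[Y_t]$, $\alpha_t=\Pr[Y_t\ge 1]$ (the paper's Lemma~\ref{lem:26-a}) and the identity $\E[T]=\sum_t \alpha_t$ (Lemma~\ref{lem:25-a}). From there the two diverge. The paper keeps the full recurrence as a family of LP constraints, together with $\alpha_t\le\min(\delta_t,1)$, proves via a perturbation argument (Lemma~\ref{lem:25-c}) that the optimal LP solution greedily saturates $\alpha_t=\min(\delta_t,1)$, and then explicitly solves the resulting auxiliary sequence $\delta^*_t=(n+1-t)(1+H_{n+1-t}-H_n)$ before applying an integral bound. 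You instead divide the recurrence by $n-t$ and telescope it into the single scalar identity $\sum_{t=1}^{n}\alpha_t/(n-t+1)=1$, using $\delta_1=n$ and $\delta_n=\alpha_n$; then upper-bounding $\sum_t\alpha_t$ subject to this one linear constraint and $\alpha_t\in[0,1]$ is a one-line fractional-knapsack observation (allocate to the smallest weights $1/(n-t+1)$), yielding a threshold solution whose value is $<n-m^\sharp+1$ with $m^\sharp=\min\{m: H_m\ge H_n-1\}$. The same integral estimate as the paper's then gives $m^\sharp+1\ge (n+1)/e$ and hence $\E[T]<n(1-1/e)+2-1/e\le\lceil n(1-1/e)+2-1/e\rceil$. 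What your route buys: it eliminates the structural lemma about the LP optimum (Lemma~\ref{lem:25-c} and its case analysis), eliminates the need to solve the auxiliary recurrence in closed form (Appendix~A), and in fact delivers the slightly sharper statement $\E[T]<n(1-1/e)+2-1/e$ without any ceiling. What it loses: the single-constraint LP is a strict relaxation of the paper's LP, so in principle it could be looser, although in practice the two agree at the threshold.

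One remark on the first stage. You correctly flag the structural lemma — that conditioned on $Y_t$, the set $A_t$ of available offline neighbors of online agent $t$ is uniform over $Y_t$-subsets of $\{t,\ldots,n\}$, giving $\Pr[t\in A_t\mid Y_t]=Y_t/(n-t+1)$ — as the hidden assumption behind the transition probabilities. This is indeed a gap in \kvv, but note that the paper you are comparing against does not close it either: Definition~\ref{def:death} simply posits those transition probabilities, citing the (informal) analysis reproduced from \kvv in Section~\ref{sec:kvv}. So your proof is on equal footing with the paper in this respect. Your sketch of the lemma (conditioning on the sub-sequence of picks that land in $\{t,\ldots,n\}$ and invoking uniform sampling without replacement) is the right idea, though to make it fully rigorous you would want to carry the full induction hypothesis that the \emph{joint} distribution of $A_t$ given $|A_t|$ is exchangeable over $\{t,\ldots,n\}$, since a pick at time $s<t$ that lands in $\{s,\ldots,t-1\}$ also conditions the future evolution and one must check it does so symmetrically.
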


\xhdr{Remarks on Theorem~\ref{thm:main}.}
(1) \kvv claimed that \(\E\sbb{\sig(\rad, M^*)} = (1 - 1/e)\, n + o(n)\).
Theorem~\ref{thm:main} refines this statement by showing that the \(o(n)\) term
is in fact at most a constant \(3 - 1/e\), yielding a sharper upper bound on the
competitiveness of any online algorithm.

(2) We emphasize that optimizing the constant \(2 - 1/e\) in
Theorem~\ref{thm:main} is not our primary objective. Our central goal is to
\emph{provide a simple, rigorous correction of the technically flawed proof
in~\kvv} and thereby \emph{rescue} the classical analytical framework developed
there. Our intent is not to discard the \kvv{} approach, but rather to repair
and clarify it—ensuring that this influential method rests on a sound technical
foundation—rather than to seek the tightest possible upper bound on
\(\E\sbb{\sig(\rad, M^*)}\).\footnote{For tighter upper bounds, readers may refer to~\cite{feige2020tighter}, 
which obtained the best-known upper bound 
\((1 - 1/e)\,n + 1 - 2/e + O(1/n!)\) for \(\E\sbb{\sig(\rad, M^*)}\). 
Their method relies on a combinatorial approach previously used to analyze online 
\emph{fractional} matching algorithms. 
Note, however, that the fractional-matching-based analytical framework in 
\cite{feige2020tighter} differs fundamentally from that of \kvv{} and from the 
present work, and is substantially more intricate and technically involved.}


\section{Formal Approach to Upper Bounding the Value $\E\sbb{\sig(\rad, M^*)}$}\label{sec:formal}
Recall that for each $t \in [n]$, $Y_t$ denotes the number of available offline neighbors of the agent arriving at time $t$ before any online actions from $\rad$, and $\Del_t:=Y_{t+1}-Y_t$ represents the decrease in the number of available offline neighbors during time $t$.
Below, we formulate the sequence $\{Y_t \mid t\in [n]\}$ as a pure discrete-time death  process. 
\begin{definition}\label{def:death}
\textbf{A Discrete-Time Death Process of $\{Y_t \mid t\in [n]\}$}: Consider the following discrete-time death process that starts at the state $Y_1=n$, and at any time $1 \le t \le n$, the transition probability is stated as follows: 
\begin{align*}
\Pr \bb{\Del_t=-2 \mid Y_t \ge 1}&=\frac{Y_t-1}{n-t+1},  \Pr \bb{\Del_t=-1 \mid Y_t \ge 1}=1-\frac{Y_t-1}{n-t+1}; \footnotemark\\
\Pr \bb{\Del_t=0 \mid Y_t =0}&=1.
\end{align*} 
\end{definition}
\footnotetext{The transition probabilities are derived directly from the case when \(Y_t \geq 2\); see the analysis of \(\Del_t\) in Section~\ref{sec:kvv} (which is indeed correct when \(Y_t \geq 2\)). We can verify that they also apply to the case when \(Y_t = 1\), in which \(\Del_t = -1\) surely.}

By the above definition, the death process \(\{Y_t \mid t \in [n]\}\) surely reaches the death state of \(0\) by the end of \(n\) rounds, i.e., \(\Pr[Y_{n+1} = 0] = 1\),\footnote{Here, \(Y_{n+1}\) can be interpreted simply as the number of available offline neighbors of agent \(t = n\) \emph{at the end} of \(\rad\).} since at least one death occurs during time \(t\) whenever \(Y_t \geq 1\).

Define:
\begin{align}
\hT := \min_{1 \leq t \leq n} \{t \mid Y_{t+1} = 0\},   \label{def:T} 
\end{align}
where \(\hT\) represents the death time. By the nature of \(\rad\), we claim that \(\hT\) is the exact (\emph{random}) number of matches obtained by \(\rad\), \ie, \(\hT = \sig(\rad, M^*)\).

\subsection{Upper Bounding the Value \(\E[\hT]\) with \(\hT\) Defined in~\eqref{def:T}}


Below, we present the correct form of Equation~\eqref{eqn:21-b} regarding the dynamics of the expectation of \(Y_t\).

\begin{lemma}\label{lem:26-a}
For each $t \in [n]$, let  \(\alp_t = \Pr[Y_t \geq 1]\) and  \(\del_t = \E[Y_t]\) with $\del_1=n$, where \(\{Y_t\}\) is the death process defined in~\eqref{def:death}.\footnote{Here, we use a distinct notation, \(\del_t\), to clearly differentiate it from \(\mu_t\), which denotes the expectation of \(Y_t\) as analyzed in Section~\ref{sec:kvv}. As discussed in Section~\ref{sec:bug}, Equation~\eqref{eqn:21-b}, which involves \(\{\mu_t\}\), is \emph{incorrect} due to overlooking boundary cases (specifically, when \(Y_t = 0\)).}
  We have that
\begin{align}\label{eqn:28-a}
\del_1=n;  \quad \del_{t+1}= \bp{1- \frac{1}{n-t+1}} \cdot \bp{\del_t-\alp_{t}}, \forall t \in [n].\footnotemark
\end{align}
\end{lemma}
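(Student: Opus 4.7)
The plan is to derive the recurrence by carefully computing $\E[\Delta_t]$ via the tower property, splitting cases on whether $Y_t \geq 1$ or $Y_t = 0$, and then to write $\delta_{t+1} = \delta_t + \E[\Delta_t]$.

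First, I would use the transition probabilities in Definition~\ref{def:death} to compute the conditional expectation of $\Delta_t$ given $Y_t$. On the event $\{Y_t \geq 1\}$, a direct calculation gives
\[
\E[\Delta_t \mid Y_t] \;=\; (-2)\cdot\frac{Y_t-1}{n-t+1} + (-1)\cdot\bP{1-\frac{Y_t-1}{n-t+1}} \;=\; -1 - \frac{Y_t-1}{n-t+1},
\]
while on the event $\{Y_t=0\}$, the definition forces $\E[\Delta_t\mid Y_t=0]=0$. The key observation, which is exactly the boundary consideration overlooked in \kvv, is that the two cases can be unified as $\E[\Delta_t\mid Y_t] = \mathbb{1}[Y_t\geq 1]\cdot\bp{-1 - (Y_t-1)/(n-t+1)}$.

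Next, I would take the outer expectation. The identity $\mathbb{1}[Y_t\geq 1]\cdot(Y_t-1)=Y_t-\mathbb{1}[Y_t\geq 1]$ (trivially true on both events) lets me replace the indicator-weighted term by $\delta_t-\alpha_t$, while $\E[\mathbb{1}[Y_t\geq 1]]=\alpha_t$ handles the constant term. This yields
\[
\E[\Delta_t] \;=\; -\alpha_t \;-\; \frac{\delta_t-\alpha_t}{n-t+1}.
\]
Finally, since $\delta_{t+1}=\delta_t+\E[\Delta_t]$, factoring $(\delta_t-\alpha_t)$ gives the claimed recursion
\[
\delta_{t+1} \;=\; (\delta_t-\alpha_t)\bP{1-\frac{1}{n-t+1}},
\]
and the initial condition $\delta_1=n$ follows from $Y_1=n$ deterministically.

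The calculation itself is routine; the only real subtlety, and the one worth emphasizing in the write-up, is the correct bookkeeping at the boundary $Y_t=0$. This is precisely what the identity $\mathbb{1}[Y_t\geq 1]\cdot(Y_t-1)=Y_t-\mathbb{1}[Y_t\geq 1]$ handles cleanly, and it is exactly where the appearance of the extra $\alpha_t$ (absent from the original \kvv derivation in~\eqref{eqn:21-b}) comes from. No further obstacle is expected.
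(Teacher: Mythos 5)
Your proof is correct and follows essentially the same route as the paper's: both condition on whether $Y_t \geq 1$ or $Y_t = 0$, compute $\E[\Delta_t]$ via the tower property, and observe that $\E[\mathbb{1}[Y_t\geq 1]\cdot(Y_t-1)] = \delta_t - \alpha_t$ (the paper phrases this as $\Pr[Y_t\geq 1]\cdot(\E[Y_t\mid Y_t\geq 1]-1) = \E[Y_t]-\Pr[Y_t\geq 1]$, which is the same identity). Your explicit statement of $\mathbb{1}[Y_t\geq 1]\cdot(Y_t-1)=Y_t-\mathbb{1}[Y_t\geq 1]$ is a clean way of packaging exactly the bookkeeping step the paper performs.
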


\footnotetext{Observe that when $t=n$, $\del_{n+1}=0$, which aligns with the fact that $Y_{n+1}=0$ surely.}
\begin{proof}
By the definition of $\{Y_t\}$, we have
\begin{align*}
\E[\Del_t] &=\Pr[Y_t \ge 1] \cdot \E[\Del_t \mid Y_t \ge 1]\\
&= \Pr[Y_t \ge 1] \cdot (-2) \cdot \frac{\E[Y_t \mid Y_t \ge 1]-1}{n-t+1}\\
&\quad +\Pr[Y_t \ge 1] \cdot (-1) \cdot \bp{1-\frac{\E[Y_t \mid Y_t \ge 1]-1}{n-t+1}}\\
&=(-1) \cdot  \Pr[Y_t \ge 1] \cdot \bp{1+\frac{\E[Y_t \mid Y_t \ge 1]-1}{n-t+1}}\\&=
(-1) \cdot \bp{ \Pr[Y_t \ge 1] + \frac{\E[Y_t]-\Pr[Y_t \ge 1]}{n-t+1}}
\end{align*}

Observe that $\alp_t=\Pr[Y_t \ge 1]$ and $\E[\Del_t]=\del_{t+1}-\del_t$. Substituting these two to the above equality, we have
\begin{align*}
&\del_{t+1}-\del_t=(-1) \cdot \bp{ \alp_t+\frac{\del_t-\alp_t}{n-t+1}} \\
\Leftrightarrow &~~
\del_{t+1}=\del_t- \alp_t-\frac{\del_t-\alp_t}{n-t+1}=(\del_t- \alp_t) \cdot \bp{1- \frac{1}{n-t+1}}.
\end{align*}
\end{proof}

\begin{lemma}\label{lem:25-a}
For each $t \in [n]$, $
\Pr[\hT \ge t]=\Pr[Y_t \ge 1]$.
\end{lemma}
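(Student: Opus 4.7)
The plan is to establish the stronger event-level identity $\{\hT \ge t\} = \{Y_t \ge 1\}$, from which the equality of probabilities follows immediately. I see no real obstacle; the argument is a careful unpacking of \eqref{def:T} together with a monotonicity observation about the death process.

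First, I would record the key structural property of the process $\{Y_t\}$ from Definition~\ref{def:death}: it is monotonically non-increasing, and $0$ is an absorbing state. Indeed, whenever $Y_t \ge 1$ we have $\Del_t \in \{-1,-2\}$, so $Y_{t+1} < Y_t$; whenever $Y_t = 0$ we have $\Del_t = 0$, so $Y_{t+1} = 0$. Iterating this, one obtains the useful equivalence
\[
\{Y_t \ge 1\} \;=\; \{Y_s \ge 1 \text{ for all } 1 \le s \le t\},
\]
since any trajectory with $Y_t \ge 1$ cannot have previously visited $0$ (else it would be stuck at $0$), and conversely $Y_t \ge 1$ follows trivially from $Y_s \ge 1$ at $s=t$.

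Next, I would unpack the definition of $\hT$. By \eqref{def:T}, $\hT \ge t$ means that the smallest index $s \in [n]$ with $Y_{s+1}=0$ is at least $t$; equivalently, $Y_{s+1} \ge 1$ for every $s$ with $1 \le s < t$. Re-indexing and using the boundary condition $Y_1 = n \ge 1$, this is the same as $Y_s \ge 1$ for every $1 \le s \le t$. Combining this with the equivalence from the previous paragraph yields
\[
\{\hT \ge t\} \;=\; \{Y_s \ge 1 \text{ for all } 1 \le s \le t\} \;=\; \{Y_t \ge 1\},
\]
and taking probabilities finishes the proof. The edge case $t=1$ is covered automatically since both $\Pr[\hT \ge 1]=1$ (as $\hT \in [n]$) and $\Pr[Y_1 \ge 1]=1$ (as $Y_1 = n$). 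The only subtlety worth flagging is that $\hT$ is well-defined because $Y_{n+1}=0$ almost surely, as already noted after Definition~\ref{def:death}, so the minimum in \eqref{def:T} is attained.
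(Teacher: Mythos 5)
Your proof is correct and follows essentially the same approach as the paper: both establish the event-level identity $\{\hT \ge t\} = \{Y_t \ge 1\}$ and then pass to probabilities. You spell out the monotonicity and absorbing-state structure of $\{Y_t\}$ more explicitly than the paper does, but the underlying argument is the same.
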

\begin{proof}
Consider a given \(t \in [n]\). It suffices to show that the two events, \((\hT \geq t)\) and \((Y_t \geq 1)\), are equivalent. Observe that \((T \geq t)\) means that the death process reaches the final death state after at least \(t\) rounds, which implies that the number of available offline neighbors of agent \(t\) must be at least \(1\) upon its arrival, \ie \(Y_t \geq 1\). Conversely, the event \( (Y_t \geq 1)\) suggests that the death process has not yet reached the death state after \(t - 1\) rounds, which is precisely \((\hT \geq t)\).
\end{proof}
Recall that $\alp_t=\Pr[Y_t \ge 1]$. By Lemma~\ref{lem:25-a}, we have
\begin{align*}
\E[\hT]=\sum_{t=1}^n \Pr[\hT \ge t]=\sum_{t=1}^n \Pr[Y_t \ge 1]=\sum_{t=1}^n  \alp_t.
\end{align*}
We formulate the task of upper bounding the value \(\E[\hT]\) as the following maximization linear program (LP):
\begin{align}
\max_{(\alp_t) \in [0,1]^n} &\E[\hT] = \sum_{t \in [n]} \alp_t \label{lp:25-1} \\
& \del_{t+1} = \bp{1 - \frac{1}{n-t+1}} \cdot \bp{\del_t - \alp_t}, &&\forall t \in [n], \label{eqn:26-a} \\
& 0 \leq \alp_t \leq \min(\del_t, 1), &&\forall t \in [n], \label{eqn:25-a} \\
& \del_1 = n. \nonumber
\end{align}
For notational convenience, we refer to the above LP as \LP~\eqref{lp:25-1}. Note that Constraint~\eqref{eqn:26-a} is due to Lemma~\ref{lem:26-a}, while Constraint~\eqref{eqn:25-a} follows from the fact that \(\alp_t = \Pr[Y_t \geq 1] \leq \E[Y_t] = \del_t\) by Markov's inequality.\footnote{Observe that we could impose additional constraints on \((\alp_t)\), \eg \(\alp_t = 1\) for all \(1 \leq t \leq \lfloor n/2 \rfloor\). This follows directly from the definition of the Death Process \((Y_t)\) in~\eqref{def:death}, as at most two deaths can occur in each round. Consequently, the death process cannot terminate within \(\lfloor n/2 \rfloor-1\) rounds, ensuring \(\Pr[T \geq \lfloor n/2 \rfloor] = 1 = \Pr[Y_{\lfloor n/2 \rfloor} \geq 1] = \alp_{\lfloor n/2 \rfloor}\). Another natural constraint is \(\alp_t \geq \alp_{t+1}\) for all \(t \in [n]\), since \((Y_t)\) is non-increasing. Nevertheless, we do not explicitly add these constraints to \LP~\eqref{lp:25-1}, as the current constraints already guarantee that any optimal solution satisfies them automatically; see Lemma~\ref{lem:25-c} for details. Thus, explicitly adding these constraints does not affect the structure of the optimal solution.}

\begin{lemma}\label{lem:25-c}
For any optimal solution \((\alp_t)\) to \LP~\eqref{lp:25-1}, we have \(\alp_t = \min(\del_t, 1)\) for all \(t \in [n]\). 
\end{lemma}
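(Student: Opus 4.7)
The plan is to reformulate \LP~\eqref{lp:25-1} into a standard fractional-knapsack LP and then show, via a textbook exchange argument, that its (unique) optimum coincides with the claimed sequence $\alp_t = \min(\del_t, 1)$. The key technical enabler is to eliminate the recurrence~\eqref{eqn:26-a}: unrolling $\del_{t+1} = \frac{n-t}{n-t+1}(\del_t - \alp_t)$ with $\del_1 = n$ makes the successive factors $(n-r)/(n-r+1)$ telescope, yielding the closed form
\[
\del_s \;=\; (n-s+1)\left(1 - \sum_{t=1}^{s-1} \frac{\alp_t}{n-t+1}\right), \qquad s \in [n].
\]
The constraint $\alp_s \le \del_s$ is therefore equivalent to $\sum_{t=1}^{s} \alp_t/(n-t+1) \le 1$, and since $\alp_t \ge 0$ these partial sums are non-decreasing in $s$, so the whole family collapses to the single inequality $\sum_{t=1}^{n} \alp_t/(n-t+1) \le 1$.

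After this reduction, \LP~\eqref{lp:25-1} reads simply: maximize $\sum_t \alp_t$ subject to $0 \le \alp_t \le 1$ and $\sum_t \alp_t/(n-t+1) \le 1$. Each unit of $\alp_t$ contributes $1$ to the objective at a budget cost of $1/(n-t+1)$, which is \emph{strictly smaller for smaller $t$}. I then apply the classical fractional-knapsack exchange: if an optimum $(\alp_t^*)$ has $\alp_{t_1}^* < 1$ and $\alp_{t_2}^* > 0$ for some $t_1 < t_2$, then shifting a small amount of mass from $t_2$ to $t_1$ at the ratio $(n-t_1+1) : (n-t_2+1)$ keeps the single budget constraint at the same level, stays inside the box $[0,1]$, and strictly increases $\sum_t \alp_t$ by an amount proportional to $(t_2-t_1)/(n-t_2+1) > 0$, a contradiction. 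Hence the optimum is the greedy solution: $\alp_t^* = 1$ for $t < k$, $\alp_k^*$ chosen to saturate the budget, and $\alp_t^* = 0$ for $t > k$, where $k := \min\{s : \sum_{t=1}^s 1/(n-t+1) > 1\}$.

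It then remains to check that this greedy solution equals $\min(\del_t, 1)$ pointwise when the $\del_t$ are recomputed from it via the closed form. For $t < k$, the minimality of $k$ gives $\sum_{s=1}^t 1/(n-s+1) \le 1$, which rearranges to $\del_t \ge 1$, so $\min(\del_t, 1) = 1 = \alp_t^*$. At $t = k$, saturation of the budget together with the closed form yields $\alp_k^* = \del_k$, while $\sum_{s=1}^k 1/(n-s+1) > 1$ rearranges to $\del_k < 1$, so $\min(\del_k, 1) = \del_k = \alp_k^*$. For $t > k$ the budget has already been exhausted, so $\del_t = 0 = \alp_t^*$.

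The main obstacle is really the reformulation step itself: the recurrence~\eqref{eqn:26-a} couples $\alp_t$ and $\del_t$ across all future times, so a naive exchange directly in the original LP would be messy --- perturbing any single $\alp_t$ propagates to every subsequent $\del_s$ and potentially forces downstream constraints $\alp_s \le \del_s$ to be retightened. The telescoping identity converts this coupled system into a single linear budget constraint, after which the standard knapsack swap and the pointwise identification with $\min(\del_t, 1)$ are routine bookkeeping.
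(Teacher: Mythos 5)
Your proof is correct, and it takes a genuinely different route from the paper's. The paper argues directly in the original variables: assuming some $\alp_\ell < \min(\del_\ell, 1)$, it increases $\alp_\ell$ by $\ep$ and decreases $\alp_{\ell+1}$ by $\ep\bigl(1-\tfrac{1}{n-\ell+1}\bigr)$, a local pairwise exchange calibrated so that $\del'_{\ell+2} = \del_{\ell+2}$ and all downstream quantities are untouched; the net gain is $\ep/(n-\ell+1) > 0$, contradicting optimality. You instead first \emph{decouple} the system: unrolling the recurrence via the telescoping factor $(n-t)/(n-t+1)$ gives the closed form $\del_s = (n-s+1)\bigl(1 - \sum_{t<s}\alp_t/(n-t+1)\bigr)$, which collapses the cascade of constraints $\alp_s \le \del_s$ into the single budget $\sum_t \alp_t/(n-t+1) \le 1$, reducing the LP to a fractional knapsack with costs $1/(n-t+1)$ increasing in $t$. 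After that, the standard greedy/swap argument and the pointwise identification with $\min(\del_t,1)$ are routine. Your reformulation has a conceptual payoff the paper's local exchange does not: it makes the structure of the optimum (a threshold rule) transparent and exposes \emph{why} the greedy is optimal (earlier items are strictly cheaper per unit of objective). The paper's route is shorter to state but requires one to guess the precise $\ell \to \ell+1$ compensation ratio; yours replaces that guess with a telescoping identity. One small slip, with no effect on correctness: in your swap, moving mass at the ratio $(n-t_1+1):(n-t_2+1)$ changes the objective by an amount proportional to $t_2 - t_1$, not to $(t_2-t_1)/(n-t_2+1)$; both are positive, so the contradiction stands. You should also note explicitly that when $\sum_t 1/(n-t+1) \le 1$ (which only happens for $n=1$) the budget is slack, all $\alp_t = 1$ is forced, and the identification $\alp_t = \min(\del_t,1) = 1$ still holds; your argument covers this but it is worth flagging.
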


\begin{proof}
Assume, by contradiction, that there exists some \(\ell \in [n]\) such that \(\alp_{\ell} < \min(\del_{\ell}, 1)\). Consider the following two cases:

\textbf{Case 1}: \(\ell = n\). We claim that \((\alp_t)\) is surely not an optimal solution since we can safely increase \(\alp_n\) to \(\min(\del_{\ell}, 1)\), which immediately leads to a strictly better solution (in terms of a larger objective value).

\textbf{Case 2}: \(\ell < n\). We can construct another solution \((\alp'_t)\) based on \((\alp_t)\) as follows:
\[
\alp'_t =
\begin{cases} 
\alp_t, & \text{if } t < \ell \text{ or } t > \ell + 1, \\
\alp_\ell + \ep, & \text{if } t = \ell, \\
\alp_{\ell+1} - \ep \cdot \bp{1 - \frac{1}{n-\ell+1}}, & \text{if } t = \ell+1,
\end{cases}
\]
where 
\[
\ep = \min\bP{\min(\del_\ell, 1) - \alp_\ell, ~~\frac{\alp_{\ell+1}}{1 - 1/(n-\ell+1)}}.
\]
Let \((\del_t)\) and \((\del'_t)\) be the sequences associated with \((\alp_t)\) and \((\alp'_t)\), respectively. We can verify the following two things.
First, \((\alp'_t)\) is feasible. Note that \(0 < \ep \leq \min(\del_\ell, 1) - \alp_\ell\) and \(\alp'_\ell = \alp_\ell + \ep\). Thus, \(0 < \alp'_\ell \leq \min(\del_\ell, 1) = \min(\del'_\ell, 1)\) since \(\del_t = \del'_t\) for all \(t \leq \ell\). This justifies the feasibility of Constraint~\eqref{eqn:25-a} on \(\alp'_\ell\). Additionally:
\begin{align*}
\del'_{\ell+1} &= (\del'_\ell - \alp'_\ell) \cdot \bp{1 - \frac{1}{n-\ell+1}} = (\del_\ell - \alp_\ell - \ep) \cdot \bp{1 - \frac{1}{n-\ell+1}}, \\
\del'_{\ell+2} &= (\del'_{\ell+1} - \alp'_{\ell+1}) \cdot \bp{1 - \frac{1}{n-\ell}} \\
&= \bB{(\del_\ell - \alp_\ell - \ep) \cdot \bp{1 - \frac{1}{n-\ell+1}} - \alp_{\ell+1} + \ep \cdot \bp{1 - \frac{1}{n-\ell+1}}} \\
& \quad \cdot \bp{1 - \frac{1}{n-\ell}} \\
&= \bB{(\del_\ell - \alp_\ell) \cdot \bp{1 - \frac{1}{n-\ell+1}} - \alp_{\ell+1}} \cdot \bp{1 - \frac{1}{n-\ell}} \\
&= \bp{\del_{\ell+1} - \alp_{\ell+1}} \cdot \bp{1 - \frac{1}{n-\ell}} = \del_{\ell+2}.
\end{align*}
The above analysis shows that \(\del'_{\ell+1} - \alp'_{\ell+1} = \del_{\ell+1} - \alp_{\ell+1} \geq 0\), which implies \(\alp'_{\ell+1} \leq \del'_{\ell+1}\). Note that \(\alp'_{\ell+1} < \alp_{\ell+1} \leq 1\). Thus, \(\alp'_{\ell+1} \leq \min(\del'_{\ell+1}, 1)\). By the definition of \(\ep\), we also have \(\alp'_{\ell+1} \geq 0\). Together, this establishes the feasibility of Constraint~\eqref{eqn:25-a} on \(\alp'_{\ell+1}\). Since \(\del'_{\ell+2} = \del_{\ell+2}\) and \(\alp'_t = \alp_t\) for all \(t < \ell\) and \(t > \ell+1\), the feasibility of Constraint~\eqref{eqn:25-a} for \(\alp'_t\) at all \(t < \ell\) and \(t > \ell+1\) follows from the feasibility of \((\alp_t)\).

Second, \((\alp'_t)\) is strictly better than \((\alp_t)\) since:
\begin{align*}
\sum_{t \in [n]} \alp'_t &= \sum_{t < \ell} \alp'_t + \sum_{t > \ell+1} \alp'_t + \alp'_{\ell} + \alp'_{\ell+1} \\
&= \sum_{t < \ell} \alp_t + \sum_{t > \ell+1} \alp_t + \alp_{\ell} + \ep + \alp_{\ell+1} - \ep \cdot \bp{1 - \frac{1}{n-\ell+1}} \\
&= \sum_{t \in [n]} \alp_t + \frac{\ep}{n-\ell+1} > \sum_{t \in [n]} \alp_t.
\end{align*}

Thus, we conclude that \((\alp'_t)\) is a feasible solution strictly better than \((\alp_t)\), which contradicts the optimality of \((\alp_t)\).
\end{proof}

\subsection{Proof of the Main Theorem~\ref{thm:main}}
Observe that Lemma~\ref{lem:25-c} provides a simple way to pinpoint the exact optimal solution to \LP~\eqref{lp:25-1}. Specifically, the optimal solution greedily assigns each $\alp_t$ as large as possible until $\del = 0$. To facilitate our analysis, we propose an auxiliary sequence as follows:\footnote{Our idea here shares the essence of the coupling technique in analyzing the convergence rate of Markov chains.}
\begin{align}\label{seq:26-a}
\del^*_1 = n, \quad \del^*_{t+1} = \bp{1 - \frac{1}{n-t+1}} \cdot \sbp{\del^*_t - 1}, \forall t \in [n].
\end{align}
We can solve the above sequence analytically as follows:
\begin{align}\label{series:26-a}
\del^*_t &= (n+1-t) \cdot \bp{1 + H_{n+1-t} - H_n} \nonumber \\&= (n+1-t) \cdot \bp{1 - \sum_{\ell=n+2-t}^n \frac{1}{\ell}}, \quad \forall t \in [n],
\end{align}
where $H_k := \sum_{\ell=1}^k \frac{1}{\ell}$ denotes the $k$-th harmonic number; see a formal proof in~\ref{app:harr}.

\begin{proof}[Proof of Theorem~\ref{thm:main}]
Let $\OPT(\LP)$ denote the optimal value of~\LP~\eqref{lp:25-1}. It suffices to show that $\OPT(\LP) \le \lceil (1-1/e) \cdot n + 2 - 1/e \rceil$, since $\E[\sig(\rad, M^*)]=\E[T] \le \OPT(\LP)$ due to the validity of~\LP~\eqref{lp:25-1}.

Observe that the solution in~\eqref{series:26-a} satisfies that for each $t \in [n]$,
\begin{align*}
\del^*_t &= (n+1-t) \cdot \bp{1-\sum_{\ell=n+2-t}^n \frac{1}{\ell}} \le (n+1-t) \cdot \bp{1- \int_{n+2-t}^{n+1} \frac{1}{x} ~\mathrm{d}x}\\
& = (n+1-t) \cdot \bp{1-\ln \bp{\frac{n+1}{n+2-t}}} := g(t).
\end{align*}

We can verify that $\hat{t} = (1-1/e) \cdot n + 2 - 1/e$ is the unique solution (though possibly fractional) to the equation $g(t) = 0$ over the continuous range $[1, n]$. Let $t^* = \lceil ~\hat{t}~ \rceil = \lceil (1-1/e) \cdot n + 2 - 1/e \rceil$. Since both $(n+1-t)$ and $\sbp{1-\ln \sbp{\frac{n+1}{n+2-t}}}$ are decreasing over $t \ge 1$, we claim that 
\[
\del^*_{t^*} \le g(t^*) \le g(\hat{t}) = 0.
\]
Observe that $t^* = \lceil (1-1/e) \cdot n + 2 - 1/e \rceil$ is an integer. 

\textbf{Case 1}: $\del^*_{t^*} = 0$. Then we have $\OPT(\LP) = t^*$ by Lemma~\ref{lem:25-c}.
 
\textbf{Case 2}: $\del^*_{t^*} < 0$. Then we have $\del^*_{t^*-1} \in (0, 1)$. By Lemma~\ref{lem:25-c}, we have 
\[
\OPT(\LP) = t^* - 1 + \del^*_{t^*-1} < t^*.
\]

Summarizing the above two cases, we conclude that $\OPT(\LP) \le t^*$. Thus, we establish the claim. 
\end{proof}

\section{Conclusion}\label{sec:con}
In this paper, we revisited the well-known $1 - 1/e$ upper bound on the competitiveness of online algorithms, as originally established in the classical work of \kvv. Through a detailed review of their analysis, we identified and addressed several minor and major technical issues in their approach. To resolve these issues, we proposed a \emph{simple yet rigorous} method, which provides a more precise upper bound of $\lceil (1 - 1/e) \cdot n + 2 - 1/e \rceil$ on the performance of any online algorithm, replacing the original $(1 - 1/e) \cdot n + o(n)$ bound. Our approach is not only notable for its simplicity but also significantly less technically involved than existing methods. By improving both the rigor and clarity of the analysis, we aim to make the foundational results on the competitiveness of online algorithms more accessible and robust for future research.

\appendix

\section{Solving the Sequence $(\del^*_t)_{t \in [n]}$ as Defined in~\eqref{seq:26-a}}\label{app:harr}

Recall that the auxiliary sequence $(\del^*_t)_{t \in [n]}$ defined in~\eqref{seq:26-a} is stated as follows:
\begin{align*}
\del^*_1 = n, \quad \del^*_{t+1} = \bp{1 - \frac{1}{n-t+1}} \cdot \sbp{\del^*_t - 1}, \quad \forall t \in [n].
\end{align*}

\begin{lemma}
The analytical formula of $(\del^*_t)$ in~\eqref{series:26-a} is a valid solution to the sequence defined in~\eqref{seq:26-a}.
\end{lemma}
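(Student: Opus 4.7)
The plan is to verify the closed form by induction on $t$, with the base case being direct substitution and the inductive step reducing to a single telescoping identity for harmonic numbers.

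For the base case $t=1$, substitute into the closed form: $\del^*_1 = (n+1-1)\sbp{1 + H_{n+1-1} - H_n} = n \cdot 1 = n$, matching the initial condition. Nothing further is needed here.

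For the inductive step, assume $\del^*_t = (n+1-t)\sbp{1 + H_{n+1-t} - H_n}$ and plug into the recurrence $\del^*_{t+1} = \bp{1 - \frac{1}{n-t+1}}(\del^*_t - 1)$. The cleanest bookkeeping is to set $m := n+1-t$, so that $1 - \frac{1}{n-t+1} = \frac{m-1}{m}$ and $\del^*_t - 1 = (m-1) + m(H_m - H_n)$. Multiplying out gives
\[
\del^*_{t+1} = \frac{(m-1)^2}{m} + (m-1)\sbp{H_m - H_n}.
\]
The target form is $(m-1)\sbp{1 + H_{m-1} - H_n}$, so after dividing by $m-1$ (the case $m=1$, i.e.\ $t=n$, is handled separately and simply yields $\del^*_{n+1}=0$) it suffices to check
\[
\frac{m-1}{m} + H_m - H_{m-1} = 1,
\]
which is immediate from the telescoping identity $H_m - H_{m-1} = \frac{1}{m}$. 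This closes the induction.

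The only real step is the algebraic rearrangement above; there is no conceptual obstacle. The one place to be slightly careful is the boundary $t=n$ (equivalently $m=1$), where dividing by $m-1$ is not legal: there one instead verifies directly that both sides equal $0$, consistent with the earlier footnote noting $\del^*_{n+1}=0$. Optionally, one could also confirm monotonicity $\del^*_{t+1} \le \del^*_t$ and eventual negativity for large $t$, but these are not required to establish that the closed form solves the recursion; they are simply consequences that are used later in the proof of Theorem~\ref{thm:main}.
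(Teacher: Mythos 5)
Your proof is correct and follows essentially the same route as the paper: both verify the closed form by induction on $t$, with a trivial base case and an inductive step that is pure algebra on the recurrence. The one cosmetic difference is that you divide by $m-1$ and therefore need a separate check at $m=1$ (i.e.\ $t=n$); the paper sidesteps this by distributing the factor $\frac{n-k}{n-k+1}$ across the two terms, combining $\frac{n-k}{n-k+1}$ into the sum as an extra $\frac{1}{n-k+1}$, and never cancelling the common factor $(n-k)$, so no case split is needed. Substantively the two arguments are the same.
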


\begin{proof}
We prove this by induction over $t = 1, 2, \ldots, n$. 

For $t = 1$, we can verify that $\del^*_1 = n$, which conforms to the definition. Assume that for any $1 \leq k \leq t$, we have:
\begin{align*}
\del^*_k = (n+1-k) \cdot \bp{1 - \sum_{\ell=n+2-k}^n \frac{1}{\ell}}, \quad \forall 1 \leq k \leq t.
\end{align*}
Now, we prove the case for $t = k+1$:
\begin{align*}
\del^*_{k+1} &= \bp{1 - \frac{1}{n-k+1}} \cdot \sbp{\del^*_k - 1} \\
&= \frac{n-k}{n-k+1} \cdot \bB{(n+1-k) \cdot \bp{1 - \sum_{\ell=n+2-k}^n \frac{1}{\ell}} - 1} \\
&= (n-k) \cdot \bp{1 - \sum_{\ell=n+2-k}^n \frac{1}{\ell}} - \frac{n-k}{n-k+1} \\
&= (n-k) \cdot \bp{1 - \sum_{\ell=n+2-k}^n \frac{1}{\ell} - \frac{1}{n-k+1}} \\
&= (n-k) \cdot \bp{1 - \sum_{\ell=n+1-k}^n \frac{1}{\ell}}.
\end{align*}

Thus, we complete the inductive proof for the case $t = k+1$. 
\end{proof}

\bibliographystyle{siamplain}
\bibliography{all}  

\end{document}